
\typeout{IJCAI-19 Instructions for Authors}


\documentclass{article}
\pdfpagewidth=8.5in
\pdfpageheight=11in
\usepackage{ijcai19}

\usepackage[perpage]{footmisc}

\usepackage{array, booktabs}
\usepackage{tikz}
\usepackage{mathrsfs} 
\usepackage{amsmath}
\usepackage{amssymb}
\usepackage{helvet}  
\usepackage{courier}  
\usepackage{array}  
\usepackage{xspace}
\usepackage{cite}
\usepackage{amsthm}
\usepackage{algorithmic}
\usepackage[ruled,linesnumbered,vlined]{algorithm2e}
\usepackage{caption}
\usepackage{subcaption}
\usepackage{enumerate}

\usepackage{times}
\usepackage{soul}
\usepackage{url}
\usepackage[hidelinks]{hyperref}
\usepackage[utf8]{inputenc}

\newcommand{\VV}{\mathcal{V}}
\newcommand{\A}{\mathcal{A}}

\newtheorem{theorem}{Theorem}
\newtheorem{lemma}{Lemma}
\newtheorem{observation}{Observation}
\newtheorem{example}{Example}
\newtheorem{defnt}{Definition}
\usepackage{todonotes}





\title{Incentivising Participation in Liquid Democracy with Breadth-First Delegation\footnote{This work was supported by the EPSRC grant EP/P031811/1.}}

\author{
Grammateia Kotsialou$^1$
\and
Luke Riley$^2$
\affiliations
$^1$Department of Political Economy, King's College London, UK \\
$^2$Department of Informatics, King's College London, UK
\emails
\{grammateia.kotsialou, luke.riley\}@kcl.ac.uk.
}

\begin{document}

\maketitle

\begin{abstract}
 Liquid democracy allows members of an electorate to either  directly vote over 
 alternatives, or  delegate their voting rights to someone they trust. 
  Most of the liquid democracy literature and implementations allow each voter to nominate only one delegate per election.  However, if that delegate abstains, the voting rights assigned to her are left unused. To minimise the number of unused delegations, it has been suggested that each voter should declare a personal ranking  over  voters she trusts.    In this paper, we show that even if personal rankings over voters are declared,  the standard delegation method of  liquid democracy
  remains problematic. 
  More specifically, we show that when personal rankings over voters are declared,  
   it could be undesirable to receive delegated voting rights, which is contrary to what  liquid democracy fundamentally relies on.
 To solve this issue, we propose a new method to delegate voting rights in an election, called \emph{breadth-first delegation}. 
  Additionally, the proposed method prioritises assigning voting rights to individuals closely connected to the voters who delegate.
\end{abstract}

\vspace{-4mm}

\section{Introduction} \label{sec:intro}

Liquid democracy is a middle ground between direct and representative democracy, as it allows each member of the electorate to directly vote on a topic, or temporarily choose a representative by delegating her voting rights to another voter. Therefore individuals who are either apathetic for an election, or trust the knowledge of another voter more than their own, can still have an impact on the election result  (through delegating). An individual who casts a vote for themselves and for others is known as a \emph{guru} (\citeauthor{ChristoffG17} \citeyear{ChristoffG17}).
 Liquid democracy  has recently started gaining attention in a few domains which we discuss to show  an overview of the general societal interest. In the political domain, local parties such as the Pirate Party in Germany,  Demoex in Sweden, the Net Party in Argentina and Partido de Internet in Spain  have been experimenting with liquid democracy implementations. Additionally, the  local governments of the  London Southwark borough and the Italian cities Turino and San Dona di Piave are working on integrating liquid democracy for community engagement processes (\citeauthor{BoellaFGKNNSSST18} \citeyear{BoellaFGKNNSSST18}). In the technology domain, the online platform LiquidFeedback uses a  liquid democracy system where a user selects a single guru for different topics (\citeauthor{LFBook} \citeyear{LFBook}; \citeauthor{KlingKHSS15} \citeyear{KlingKHSS15}). 
 Another prominent online example  is GoogleVotes (\citeauthor{GoogleVotes} \citeyear{GoogleVotes}),
 where each user  wishing to delegate can select a ranking over other voters.

 Regardless of the increasing interest in liquid democracy, there exists outstanding theoretical issues. This work focuses on liquid democracy systems where each individual  wishing to delegate can select a ranking over other voters.  In such systems, given the common  interpretation that delegations of voting rights are multi-step and transitive\footnote{If a voter $i$ delegates to a voter $j$, then $i$ transfers to $j$ the voting rights of herself and all the others that had been delegated to $i$.}, we observe that: searching for a guru  follows a depth-first search in a graph that illustrates all delegation preferences within an electorate, e.g. nodes represent the voters and directed edges the delegation choices for each voter.  For this reason, we name this standard approach of delegating voting rights as  \emph{depth-first delegation}. Despite its common acceptance, we came across an important disadvantage for 
 this rule. We show that when depth-first delegation is used, it could be undesirable to receive the voting rights of someone else. At this point, we emphasize that  disincentivising voters to participate as gurus is in contrast to the ideology of liquid democracy. Motivated by this, we propose a new rule for delegating voting rights, the \emph{breadth-first delegation}, which guarantees that casting voters (those who do not delegate or abstain) weakly prefer to  receive delegated voting rights, i.e. to participate as a guru.

We outline this paper as follows: Section \ref{sec:intro} discusses the latest applied and theoretical developments in liquid democracy and gives our model's preliminaries. In Sections \ref{sec:Delegation} \& \ref{sec:Participation}, we define delegation graphs, delegation rules and two types of participation. Section \ref{sec:delrulepref}  formally introduces a new delegation rule while Section \ref{sec:delrulesegs} compares this rule with the standard one. Finally,  Section \ref{sec:con} concludes this work.

\subsection{Related work} \label{sec:RelatedWork}

There currently exists a lack of theoretical analysis on liquid democracy. However, we summarise the main differences of our work to the main undertaking so far. 

As outlined by~\citeauthor{Brill18} (\citeyear{Brill18}), one of the main ongoing issues in liquid democracy is how to handle personal rankings over voters. His work discusses possible solutions around this issue without giving a formalised model, which this paper does. For two election alternatives and a ground truth on which  the correct one is,~\citeauthor{KahngMP18} (\citeyear{KahngMP18}) find that: (a) there is no decentralised liquid democracy delegation rule that is guaranteed to outperform direct democracy and (b) there is a centralised liquid democracy delegation rule that is guaranteed to outperform direct democracy as long as voters are not completely misinformed or perfectly informed about the ground truth.
In comparison, our model can be used in a wider variety of elections, as it allows for multiple alternatives and no ground truth. Additionally our delegation rules can be used in a central or decentralised manner, thus the negative result (a) does not apply to our paper. The work  of \citeauthor{ChristoffG17} (\citeyear{ChristoffG17})  focuses on the existence
of delegation cycles and inconsistencies that can occur when there are several binary issues to be voted on with a  different guru assigned for each issue.  In comparison, we avoid delegation cycles by stating that a delegation chain (a path from a delegating voter to their assigned guru) cannot include the same voter more than once. Furthermore, individual rationality issues between multiple elections is out of scope for this work.  Last, \citeauthor{BrillT18} (\citeyear{BrillT18}) introduce a special case of \citeauthor{ChristoffG17}'s model, which allows a single voter to be assigned several gurus. However, our model assigns one guru per voter.

Similar to our work, GoogleVotes  (\citeauthor{GoogleVotes} \citeyear{GoogleVotes}) allows a user to select a ranking over other voters and uses, what the authors describe as,  a back-track breadth first search  to assign a guru to a voter. We cannot complete a more comprehensive comparison to GoogleVotes as they have published only a general description of their system (without a formal model). However, we know that their delegation rule is different  to our proposed breadth-first delegation rule as in the Tally/Coverage section of their video example (from minute 32 of \citeauthor{GoogleVotesPresentation} \citeyear{GoogleVotesPresentation}), their rule assigns guru $C$ to voter $F$, while our rule would assign guru $A$ to voter $F$.

\subsection{Preliminaries}
\label{sec:Model}

Consider a set of voters $\VV$ and a set of alternatives or outcomes $\A$.  
The set of possible electorates is given by ${\cal E}(\VV) = 2^\VV \backslash \{ \emptyset \}$, i.e. non-empty subsets of $\VV$. In our model, for every election there are three sets of electorates $V^a, V^c, V^d \in {\cal E}(\VV)$ such that $V^a \cap V^c \cap V^d = \emptyset$ and $V^a \cup V^c \cup V^d = \VV$, 
where sets $V^a$, $V^c$, $V^d$ consist of those who abstain, cast a vote and  wish to delegate their voting rights, respectively. 

A \emph{preference relation over alternatives} for a voter $i \in \VV$ is denoted by $\succ^{\A}_i$ and is a binary relation on $\A$, i.e.: for $x,y\in A$ with $x \neq y$, the expression $x \succ^{\A}_i y$  indicates that voter $i$ strictly prefers alternative $x$ over alternative $y$. A   \emph{preference relation over voters}  for voter $i \in \VV$ is denoted by $\succ^{\VV}_i$ and is a binary relation on $\VV$, i.e.: for $i, x,y \in \VV$ with\footnote{A voter cannot include herself in her preference relation over voters.} $i \neq x, y$ and $x \neq y$, the expression $x \succ^{\VV}_i y$  indicates that voter $i$ strictly prefers to delegate her voting rights to voter $x$ instead of voter $y$. For both preference relations, we allow an index to identify ranking positions e.g. for any $i \in \VV^d$, her $m$-th preferred voter is denoted by $\succ^\VV_{i, m}$.

Consider a binary relation on a set $W$ given by $\succ^{W}_i$. Then  $\succ^{W}_i$ is:
$(a)$ \emph{complete} iff for every pair $x, y \in W$ either $x \succ^{W}_i y$ or $y \succ^{W}_i x$ holds,
$(b)$ \emph{antisymmetric} iff for every pair $x, y \in W$,  if $x \succ^{W}_i y$ then $y \succ^{W}_i x$ does not hold, and
$(c)$ \emph{transitive} iff for all $x, y, z \in W$,  if  $x \succ^{W}_i y$ and $y \succ^{W}_i z$,  then $x \succ^{W}_i z$.
Both preference relations over alternatives and  preference relations over voters are  antisymmetric and transitive but not complete (we do not enforce voters to rank every other member of the electorate as we consider this an unrealistic  scenario for large electorates).


The set of all possible preference relations $\succ^{\A}_i$ and  $\succ^{\VV}_i$, for any $i \in \VV$,  are denoted by ${R^\A}$ and ${R^\VV}$, respectively. A \emph{preference profile over alternatives} is a function $P^\A:\cal E (\VV) \rightarrow $~2$^{R^\A}$, where $P^A(N)$  returns a set of preference relations over alternatives (maximum one for each voter in $N$). For example, given an electorate $N=\{i,j,k\}$, a preference profile  $P^A(N)$ could return $\{(i, \succ^{\A}_i),(j, \succ^\A_j)\}$, meaning that  agent $i$ and $j$ have been assigned a preference relation over alternatives  but $k$ has not (as $k$ is either delegating or abstaining).  Similarly, we define as a \emph{preference profile over voters} a function $P^\VV:\cal E (\VV) \rightarrow $~2$^{R^\VV}$, where $P^\VV(N)$  returns a set of preference relations over voters (maximum one for each voter in $N$). Given profiles $P^\A$ and  $P^\VV$, voters are assigned to the $V^a$, $V^c$ and $V^d$ electorates as follows.
 If $(i, \succ^\A_i) \in P^\A(N)$, we infer that  voter $i$ casts a vote according to $\succ^\A_i$ and therefore becomes a member of the casting electorate $V^c$.
If $(i, \succ^\A_i) \notin P^\A(N)$ and  $(i, \succ^\VV_i) \in P^\VV(N)$, then $i$ becomes a member of the delegating electorate $V^d$. 
If $(i, \succ^\A_i) \notin P^\A(N)$ and  $(i, \succ^\VV_i) \notin P^\VV(N)$, $i$ becomes a member of the abstaining electorate $V^a$.

Given an electorate $N$, adding or removing a preference relation over alternatives (or over voters) from a preference profile over alternatives $P^\A(N)$ (or over voters $P^\VV(N)$), is denoted as follows.
For a tuple $(i, \succ^\A_i) \in P^\A(N)$, a voter $j \in \VV~\backslash  ~N$ and $j$'s assigned preference relation over alternatives $\succ^\A_j\in R^\A$:
\begin{align}
& P^\A_{-i}(N) := P^\A(N)~\backslash ~\{ (i, \succ^\A_i)\}, \nonumber\\
& P^\A_{+ (j, \succ^\A_j)} (N):= P^\A(N)~ \cup ~\{(j, \succ^\A_j)\}.\nonumber
\end{align}
Similarly, for a tuple  $(i, \succ^\VV_i)$ $\in P^\VV(N)$, a voter $j \in \VV~ \backslash ~N$ and $j$'s assigned preference relation over voters $\succ^\VV_j \in R^\VV$:
\begin{align}
&P^\VV_{-i}(N) := P^\VV(N)~\backslash ~\{ (i, \succ^\VV_i)\}, \nonumber \\
&P^\VV_{+(j, \succ^\VV_j)}(N) := P^\VV(N)~ \cup ~\{(j, \succ^\VV_j)\}. \nonumber
\end{align}
To simplify the above, we will be using the notation $P^\A, P^\VV, P^\A_{-i}$, $P^\A_{+ j}$, $P^\VV_{-i}$ and $P^\VV_{+j}$, accordingly.



\section{Delegation graph and delegation rules} \label{sec:Delegation}

We use a graph to model possible delegations between voters:
\begin{defnt}A \emph{delegation graph} is  a weighted directed graph $G=(\VV, E, w)$ where:

\begin{itemize}
\itemsep0cm
\item $\VV$ is  the set of nodes representing the agents registered as voters;
\item $E$ is the set of directed edges representing delegations  between voters; and 
\item $w$ is the weight function $w : E \mapsto \mathbb{N}$ that assigns a value to an edge $(i, j)$ equal to $i$'s preference ranking of $j$.
\end{itemize}
\end{defnt}

To generate a delegation graph, we use the following:

\begin{defnt} Define as $g$ the \emph{delegation graph function} which takes as input a preference profile over voters $P^\VV$ and returns the related delegation graph $G=(\VV, E, w)$ with the following property: for every $ i, j \in \VV$ and  $i \neq j$,
\begin{itemize}
\item if there exists $(i, \succ^\VV_{i}) \in P^\VV$ with $\succ^\VV_{i,x} = j$; then there exists $(i, j) \in E$ where $w((i, j)) = x$.
\end{itemize}

\end{defnt}

We can evaluate a delegation graph through the following:

\begin{defnt} A  \emph{delegation rule function} $d$ takes as input a preference profile over alternatives $P^\A$ together with a delegation graph $G$, and returns another preference profile over alternatives $\hat{P}^{\A}$ that resolves delegations. More specifically,  
\begin{itemize}
\itemsep0cm
\item if $(i, \succ^\A_i) \in \hat{P}^{\A}$, then $i$  casts her vote,
\item if $(i, \succ^\A_j) \in \hat{P}^{\A}$ for a  voter $j \neq i$, then $j$ becomes $i$'s final delegate, i.e. her guru,
\item  if $(i, \succ^\A_k) \notin \hat{P}^{\A}$ for any $k \in \VV$, then $i$ abstains. 
\end{itemize}
\label{ref:DR}
\end{defnt}
\noindent For each voter $i \in \VV$, a delegation rule analyses  the subtree of the delegation graph rooted at node $i$ and decides whether $i$ casts, delegates or abstains. If voter $i$ is found to delegate, the chosen delegation rule function will traverse $i$'s subtree 
to find $i$'s guru. 

To get  the outcome of an election, we use a  voting rule function $f$. In our model, $f$ takes as input the modified preference profile over alternatives $\hat{P}^{\A}$ (which incorporates delegations) and  returns a single winner or a ranking over  alternatives (depending on the voting rule used).  In Section \ref{sec:delrulesegs}, we show that the output of the voting rule depends on the chosen delegation rule, meaning that we could get different election results when only the delegation rule function is different, i.e. $f(d(P^\A, g(P^\VV))) \neq f(d'(P^\A, g(P^\VV)))$, when  $d \neq d'$.


\section{Cast and guru participation} \label{sec:Participation}
The key property that we  investigate is participation. The participation property holds if a voter, by joining an electorate, is at least as satisfied as before joining. This property has been defined only in the context of vote casting (\citeauthor{Fishburn} \citeyear{Fishburn}; \citeauthor{Moulin} \citeyear{Moulin}). Due to the addition of delegations in our model, we  establish two separate definitions of participation reflecting this new functionality\footnote{There could be other interesting participation properties for liquid democracy, such as incentivising deviation from delegating to casting. But this is out of the paper's scope, as we focus on finding delegation rules that weakly benefit casting voters who become gurus.}.


For both of the following definitions, note that for an electorate $N \in {\cal E}(\VV)$, the set of all preference profiles over alternatives is given by $\cal{P}$\hspace{-0mm}$^{\A,N}$, while
the set of all preference profiles over delegates is given by
$\cal{P}$\hspace{-0mm}$^{\VV,N}$.

A voting rule $f$ satisfies the \emph{cast participation} property when every voter $i \in \VV$ weakly prefers joining any possible voting electorate $V^c$  compared to abstaining and regardless of who is in the delegating electorate $V^d$. 


\begin{defnt}
The \emph{Cast Participation} property holds for a voting rule $f$ iff:
\begin{align*}
f(d(P^\A, g(P^\VV))) \succeq^{\A}_i f(d(P^\A_{-i}, g(P^\VV))),
\end{align*}  
for every possible disjoint casting and delegating electorates $V^c, V^d \in {\cal E}(\VV)$, where $i \in V^c$, and every possible preference profile for these electorates $P^A \in {\cal P}^{\A, V^c}$ and $P^\VV \in {\cal P}^{\VV, V^d}$. 
\end{defnt} 

For any casting and delegating electorates $V^c$ and $V^d$, a voting rule $f$ satisfies the \emph{guru participation} property when any voter $i \in V^c$ weakly benefits from receiving  additional voting rights of any voter $j \in \VV$. 



\begin{defnt} 
The \emph{Guru Participation} property holds for a voting rule $f$ iff:
\begin{align} \label{guru_p}
f(d(P^\A, g(P^\VV))) \succeq^{\A}_i f(d(P^\A, g(P^\VV_{-j}))),
\end{align}  
for every possible disjoint casting and delegating electorates $V^c, V^d \in {\cal E}(\VV)$, where $i \in V^c$ , $j \in V^d$, and every possible profile $P^A \in {\cal P}^{\A, V^c}$ and $P^\VV \in {\cal P}^{\VV, V^d}$ that assign $j$'s vote to guru $i$, i.e. $(j, \succ^\A_i) \in d(P^\A, g(P^\VV))$.
\end{defnt}   


Let $F$ be the set of all voting rules. It is known  that only a subset $\bar{F}\subset F$ satisfy (cast) participation. For example,   \citeauthor{Fishburn} (\citeyear{Fishburn}) show that  single transferable vote  does not satisfy (cast) participation, while \citeauthor{Moulin} (\citeyear{Moulin}) shows there is no Condorcet-consistent voting  rule satisfying this property  given 25 or more voters. We explores guru participation for voting rules in $\bar{F}$, and
our results build on the following observation, which we intuitively descibe: if a new voter $j$ joins the delegating electorate and only one voter $i$  from the casting electorate increases the number of times she is assigned as a guru, then guru participation is satisfied.

\begin{observation}
 Consider $i, j \in \VV$, profile $\hat{P}^\A$ returned by $d(P^\A, g(P^\VV))$ and profile $\hat{P}^{'\A}$ returned by  $d(P^\A, g(P^\VV_{+j}))$,
 where $i$ has been assigned as $j$'s guru, i.e. $(j, \succ^\VV_i) \in \hat{P}^{'\A}$. 
 Guru $i$ becomes weakly better off after $j$ delegates if the following holds. For every $k \in \VV$: 
\begin{enumerate}[$(a)$]
\itemsep0cm
\item $k$'s vote is assigned to  guru $l \in \VV$ by both returned preference profiles, i.e. $(k, \succ^\VV_l) \in \hat{P}^{\A} \cap \hat{P}^{'\A}$, or
\item $k$'s vote is assigned to guru $i$ after $j$ joins the delegating electorate, i.e. $(k, \succ^\A_i) \in \hat{P}^{'\A}$,
\end{enumerate}
\end{observation}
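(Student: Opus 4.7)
The plan is to establish $f(\hat{P}^{'\A}) \succeq^{\A}_i f(\hat{P}^\A)$ by viewing the transition from $\hat{P}^\A$ to $\hat{P}^{'\A}$ as a sequence of modifications, each aligned with $i$'s preferences, and then invoking the cast participation property of $f \in \bar{F}$ at every step. First, I would partition the electorate using the stated conditions: let $K \subseteq \VV$ denote the set of voters $k$ satisfying $(b)$ but not $(a)$. By hypothesis, every voter in $\VV \setminus K$ has an identical entry in both profiles (by condition $(a)$), while every voter $k \in K$ has $i$ as their guru only in $\hat{P}^{'\A}$. In particular, $j \in K$ since $j$ has no entry in $\hat{P}^\A$, and for every other $k \in K$ the previous entry $(k, \succ^{\A}_{l_k})$ in $\hat{P}^\A$ involves some guru $l_k \neq i$.

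Next, I would enumerate $K = \{k_1, \ldots, k_m\}$ with $k_1 = j$ and construct a chain of intermediate profiles $P_0 = \hat{P}^\A, P_1, \ldots, P_m = \hat{P}^{'\A}$, where at step $t$ we either add the fresh entry $(j, \succ^{\A}_i)$ (when $k_t = j$) or replace $(k_t, \succ^{\A}_{l_{k_t}})$ by $(k_t, \succ^{\A}_i)$ otherwise. Chaining $f(P_t) \succeq^{\A}_i f(P_{t-1})$ across all $t$ then yields the conclusion. For the pure addition step at $t=1$, cast participation applied to $i$ gives the inequality directly: adding $(j, \succ^{\A}_i)$ has the same effect on $f$ as $i$ casting one extra copy of her own vote.

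The main obstacle is the substitution steps for $t \geq 2$, since cast participation does not cover replacing one vote with another. I would handle this by decomposing each substitution into two elementary moves: first removing $(k_t, \succ^{\A}_{l_{k_t}})$ to obtain an auxiliary profile, then adding $(k_t, \succ^{\A}_i)$. The addition half is again controlled by cast participation applied from $i$'s perspective. The removal half requires a complementary monotonicity-type argument: intuitively, deleting a vote whose guru differs from $i$ cannot strictly push $f$'s outcome away from $i$'s preferred alternatives, because it can be viewed as the dual of $l_{k_t}$ losing a supporter. I expect the precise justification of this removal step to be the technically delicate part, likely demanding either a restriction to voting rules in $\bar{F}$ that satisfy an additional monotonicity axiom, or a direct aggregate comparison between $\hat{P}^\A$ and $\hat{P}^{'\A}$ rather than a stepwise one.
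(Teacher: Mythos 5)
The paper never actually proves this statement---it is presented as an unproved ``observation''---so there is no official argument to match; your attempt is the natural way to try to make it rigorous, and it correctly locates the real difficulty. The profile-by-profile decomposition and the use of cast participation for the pure addition steps are sound: adding an entry $(k,\succ^{\A}_i)$ amounts, from the voting rule's point of view, to a new casting voter whose ballot coincides with $i$'s, so cast participation applied to that voter gives $f(P_t)\succeq^{\A}_i f(P_{t-1})$, and transitivity of $\succeq^{\A}_i$ chains the steps together. This already establishes the observation in the special case where every voter falling under clause $(b)$ had no entry in $\hat{P}^{\A}$ (i.e.\ previously abstained or had no guru).

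The gap you flag at the substitution step is genuine and cannot be closed from cast participation alone; in fact the observation, read so that clause $(b)$ covers voters whose guru changes from some $l\neq i$ to $i$, is false for some $f\in\bar{F}$. Take $\A=\{x,y,z\}$ and plurality with a fixed tie-breaking order preferring $y$ to $x$ (this rule satisfies cast participation: adding a ballot can only raise its own top choice's score, so the winner either stays put or becomes that top choice). Let $\succ^{\A}_i$ be $z\succ x\succ y$ and let $\hat{P}^{\A}$ produce tallies of $5$ for $x$, $4$ for $y$ and $1$ for $z$, so $x$ wins. If one voter $k$ switches from a guru whose top choice is $x$ to guru $i$, and $j$ joins with guru $i$, then conditions $(a)$ and $(b)$ hold for every voter, yet the tallies become $4$, $4$ and $3$ and the winner moves to $y$, which $i$ strictly disprefers to $x$. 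So your instinct is exactly right: one must either impose an additional monotonicity-type axiom on $f$ beyond cast participation, or restrict clause $(b)$ to voters with no prior guru assignment. Since such guru switches can occur under both delegation rules studied later, this is a gap in the source statement itself, not merely in your write-up, and your honest refusal to paper over the removal step is the correct call.
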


\section{Introducing breadth-first delegation}
\label{sec:delrulepref}


Recall that liquid democracy allows for multi-step delegations. Therefore, the guru of any $i \in V^d$ could be any voter $j \in V^c$ who is  in the sub tree of the delegation graph with root $i$. In addition, the assigned  guru $j$ may not be included in $i$'s preference relation $\succ^\VV_i$, i.e. it could be  that $\nexists ~x$ such that $\succ_{i,x}^\VV = j$. In this case, there is at least one intermediate delegator between voter $i$ and the assigned guru $j$. To find the exact intermediate delegators, we introduce delegation chains. A \emph{delegation chain} for a voter $i \in V^d$ starts with $i$, then lists  the intermediate voters in $V^d$ who have further delegated $i$'s voting rights and 
ends with $i$'s assigned guru $j \in V^c$. These chains (see Definition~\ref{deft:dc}) must satisfy the following conditions:  $(a)$ no voter occurs more than once in the chain (to avoid infinite delegation cycles that could otherwise occur) and  $(b)$  each member of the chain must be linked to the next member  through an edge in the delegation graph, which is generated from the given  preference profile over voters.

\begin{defnt} Given profiles $P^\A$ and $P^\VV$,  a voter $i \in V^d$ and her  guru $j \in V^c$, we define 
 a \emph{Delegation Chain} for $i$ to be   an ordered tuple   $C_i = \langle i, \ldots, j \rangle$ with the following properties:  
\begin{enumerate}[$(a)$]
\itemsep0cm
\item  For  an integer $x \in [1, |C_i|]$, let $C_{i, x}$ indicate the voter at the $x$-th position in $C_i$. Then for  any pair of integers  $x,y \in[1, |C_i|]$ with $x \neq y$, 
$$C_{i,x} \neq C_{i,y}.$$
\item  For each integer $z \in [1, |C_i|-1]$, there exists an edge
$$(C_{i, z},  C_{i, z+1}) \in E \in  g(P^\VV).$$

\end{enumerate}
\label{deft:dc}
\end{defnt} 
Observe that the $x$ in the expression $C_{i, x}$  also indicates how deep the voter $C_{i, x}$ is in the delegation graph subtree rooted with 
$i$. Thus sometimes we refer to $x$ as the depth of $C_{i, x}$ in $C_i$. The function $w$ takes as input a delegation chain and returns a list of the weights assigned to each edge among voters in $C_i$, that is, $w(C_i) = [w(C_i)_1, \ldots, w(C_i)_x, \ldots, w(C_i)_{n-1}]$, where $w(C_i)_x$ is the weight of edge $(C_{i, x}, C_{i, x+1})$  and $n= |C_i|$.


Delegation chains  can be used as a tool to find  a guru for a voter $i \in V^d$. The standard interpretation of  liquid democracy delegations prioritises all possible delegation chains involving $i$ and $i$'s most preferred voter $\succ_{i,1}^\VV$ before all possible delegation chains involving $i$ and $i$'s second preferred voter $\succ_{i,2}^\VV$ and so on. Note that this priority rule hold for the deeper levels of the delegation graph subtree rooted at $i$. In other words, we observe that the standard way to select $i$' guru is to choose the first casting voter found through a depth first search in $i$'s subtree, which motivates the next definition.

 A  \emph{depth-first delegation} rule $d^D$ assigns guru $j$ to  $i$ iff:  $(a)$ there is a delegation chain $C_i$ that can be formed from $i$ to $j$, and $(b)$ there is no other delegation chain $C_i'$ leading to a different guru $k$ that has a smaller weight at the earliest depth after the root, compared to $C_i$.


\begin{defnt}
\label{def:DFD}
For $i, j, k \in \VV$, a \emph{depth-first delegation rule} $d^{D}$ returns a profile $\hat{P}^{\A}$ with  $(i, \succ^\A_j) \in \hat{P}^{\A}$ iff $(a)$ and $(b)$ hold:  
\begin{enumerate}[$(a)$]
\itemsep0cm
\item $\exists~C_i$ with $C_{i, |C_i|} = j$,
\item  $\nexists~C_i'$ such that: 
\begin{enumerate}[$b1.$]
\item $C'_{i, |C_i'|} = k$ ~for  $k \neq j$, 
\item 
\begin{itemize}
\item $\exists y$: $w(C_i')_y < w(C_i)_y$  and  
\item $w(C_i')_x \leq w(C_i)_x$ ~for all $~0 < x < y$.
\end{itemize}
\end{enumerate}
\end{enumerate}
\end{defnt}

\begin{figure}
\centering
\begin{minipage}{.45\textwidth}
\begin{tabular}{cc}
\vspace{-.2cm}
\hspace{-.55cm}\includegraphics[width=.5\linewidth]{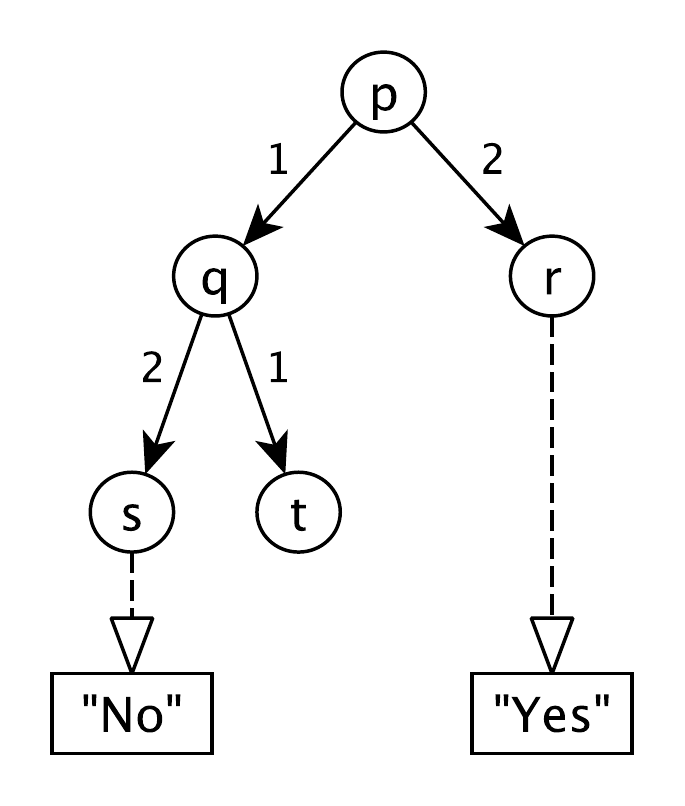}&\includegraphics[width=.5\linewidth]{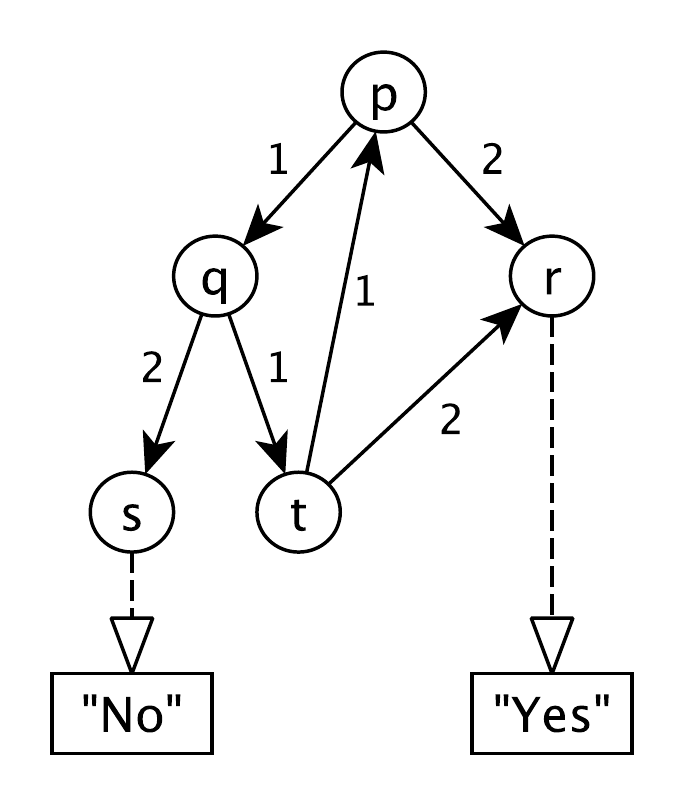}\\
\hspace{-.35cm}$(a)$&$(b)$
\end{tabular}
\captionof{figure}{$(a)$ A delegation graph with electorates $V^a=\{t\}$, $V^c = \{s, r\}$ and $V^d = \{p, q\}$, meaning that $t$ abstains, $s, r$ cast, and $p, q$ delegate. The preference relations over alternatives are: ``No" $\succ^\A_s$ ``Yes" and  ``Yes" $\succ^\A_r$ ``No".\quad 
  $(b)$ A delegation graph with electorates $V^a=\{\}$, $V^c = \{s, r\}$ and $V^d = \{p, q, t\}$. 
 The preference relations over alternatives remain the same. The only difference to $(a)$ is  that voter $t$ decides to delegate  with a preference relation over voters $p \succeq_t^\VV r$.}
  \label{fig:loop}
\end{minipage}
\end{figure}

\begin{example}
Consider the delegation graph  in Figure~\ref{fig:loop}~$(a)$. There are two delegation chains\footnote{Recall that $\langle p, q, t\rangle$ is not a valid delegation chain as $t \notin V^c$.} available for voter $p \in \VV$: $C_p = \langle p, r\rangle$ and $C_p' = \langle p, q, s\rangle$ with weights $w(C_p) = [2]$ and $w(C_p') = [1, 2]$, respectively.  The $d^D$ rule returns profile $\hat{P}^{\A}$ that assigns $s$ as the guru of $p$, i.e. $(p, \succ^\A_s) \in \hat{P}^{\A}$, due to inequality $w(C_p')_1 < w(C_p)_1$. Note that  $C_p'$ satisfies  Definition~\ref{def:DFD}.
\label{ex:dd}
\end{example}

In Example \ref{ex:dd}, $p$'s voting right is assigned to guru $s$, but why should $s$ (who is the second preference of $q$) outrank agent $p$'s explicit second preference $r$? This question gains even more importance the longer the  depth first delegation chain is.
Given this issue, we define a novel delegation rule that prioritises a voter's explicit preferences as follows. A  \emph{breadth-first delegation} rule $d^B$ assigns guru $j$ to  $i$ iff:  $(a)$ there is a delegation chain $C_i$ that can be formed from $i$ to $j$; and $(b)$ there is no other delegation chain $C_i'$ leading to a different guru $k$ with: either a shorter length or,  an equal length and a smaller weight at the earliest depth after the root, compared to $C_i$.


\begin{defnt}
\label{def:BFD}
 For $i,j,k \in \VV$, a \emph{breadth-first delegation rule} $d^{B}$ returns a profile $\hat{P}^{\A}$ with  $(i, \succ^\A_j) \in \hat{P}^{\A}$ iff $(a)$ and $(b)$ hold: 
\begin{enumerate}[$(a)$]
\item $\exists~C_i$ with $C_{i, |C_i|} = j$,
\item $\nexists~C_i'$ such that $C'_{i, |C_i|} = k$, for $k \neq j$, and
\begin{enumerate}[$b1.$]
\itemsep0cm
\item $|C_i'| < |C_i|$, or
\item  
\begin{itemize}
\item $|C_i'| = |C_i|$ and
\item $\exists y$: $w(C_i')_y < w(C_i)_y$  and  
\item $w(C_i')_x \leq w(C_i)_x$ ~for all $~0 < x < y$. 
\end{itemize}
\end{enumerate}

\end{enumerate}
   \label{deft:BFD}
\end{defnt}

\begin{example}
Consider the delegation graph  in Figure~\ref{fig:loop}~$(a)$. There are two delegation chains available for voter $p \in \VV$: $C_p = \langle p, r\rangle$ and $C_p' = \langle p, q, s\rangle$ with weights $w(C_p) = [2]$ and $w(C_p') = [1, 2]$, respectively.  The $d^B$ rule returns profile $\hat{P}^{\A}$ that assigns $r$ as the guru of $p$, i.e. $(p, \succ^\A_r) \in \hat{P}^{\A}$, due to inequality $|C_p| < |C_p'|$. Note that  $C_p$ satisfies  Definition~\ref{deft:BFD}.
\label{eg:3}
\end{example}

\section{Depth-first versus breadth-first delegation}
\label{sec:delrulesegs}

Through the next two examples, we show that different delegation rules can have different properties. More specifically, we present an instance where the depth-first delegation rule cannot guarantee guru participation, while the breadth-first delegation rule does.



\begin{table} 
\centering
\renewcommand{\arraystretch}{1.2}
\begin{tabular}{ cccc } 
 \hline
\hspace{-1mm} \textsc{delegation graph} \hspace{0mm} & \hspace{0mm} \textsc{delegation rule} \hspace{-1mm} & \hspace{-1mm} Yes \hspace{-1mm} & \hspace{-1mm} No \hspace{-1mm}
\\ \hline 
 Figure~\ref{fig:loop} $(a)$ & $d^D$ & 1 & 3 \\ 
 Figure~\ref{fig:loop} $(b)$ & $d^D$ & 3 & 2 \\  
Figure~\ref{fig:loop} $(a)$ & $d^B$ & 2 & 2 \\  
Figure~\ref{fig:loop} $(b)$ & $d^B$ & 3  & 2 \\
\hline
 \end{tabular}
  \caption{Election results for Figure~\ref{fig:loop} when using either  the depth-first or the breadth-first delegation rule.} \label{tab:electionResults}

\end{table}

\begin{example}
Consider the delegation graph in Figure~\ref{fig:loop}(a) and all possible delegation chains available to each voter in $V^d$:  $C_p = \langle p, r\rangle$, $C_p' = \langle p, q, s\rangle$ and $C_q = \langle q, s \rangle$. Using rule $d^{D}$, voter $p$ is assigned guru $s$ through chain $C_p'$ (see Example~\ref{ex:dd}), while voter $q$ is also assigned guru $s$ through chain $C_q$. Therefore $d^{D}$ returns the preference profile over alternatives  $\{ (p, \succ^\A_s), (q, \succ^\A_s), (s, \succ^\A_s), (r, \succ^\A_r)\}$.  Using  rule $d^{B}$ instead,  voter $p$ is  assigned guru $r$ through $C_p$ (see Example~\ref{eg:3}), while $q$'s guru remains the same. Therefore $d^{B}$  returns another preference profile over alternatives: $\{(p, \succ^\A_r), (q, \succ^\A_s), (s, \succ^\A_s), (r, \succ^\A_r)\}$.

\label{eg:ExampleNoLoop}
\end{example}

In the next example we focus on the case where the previously abstaining voter $t$ decides to delegate and show that the election result is inversed only when  $d^D$ is used (see Table \ref{tab:electionResults}).

\begin{example} \label{eg:ExampleLoop}

Consider the delegation graph in Figure~\ref{fig:loop}(b) and all possible delegation chains available to each voter in $V^d$ with their respective edge weights:
\begin{center}
\renewcommand{\arraystretch}{1.3}
\begin{tabular}{ l l } 
 \textsc{delegation chain} & \textsc{edge weights}\\ 
$C_p= \langle p, r \rangle$ & $w(C_p)= [2]$, \\ 
$C_p'= \langle p, q, s \rangle$ & $w(C_p')= [1, 2]$,\\
$C_p''= \langle p, q, t, r \rangle$ & $w(C_p'')= [1, 1, 2]$, \\
$C_q= \langle q, s \rangle$ & $w(C_q)= [2]$, \\
$C_q'= \langle q, t, r \rangle$ & $w(C_q')= [1, 2]$, \\
$C_q''= \langle q, t, p, r \rangle$ & $w(C_q'')= [1, 1, 2]$, \\
$C_t= \langle t, r \rangle$ & $w(C_t)= [2]$, \\
$C_t'= \langle t, p, r \rangle$ & $w(C_t')= [1, 2]$, \\
$C_t''= \langle t, p, q, s \rangle$ & $w(C_t'')= [1, 1, 2]$. \\
\end{tabular}
 \end{center}


Using rule $d^{D}$, observe that voter $p$ is assigned guru $r$ through the chain $C_p''$ ~due to ~$w(C_p'')_1 < w(C_p)_1$, $w(C_p'')_1 = w(C_p')_1$ ~and~ $w(C_p'')_2 < w(C_p')_2$. Voter $q$ is also assigned
 guru $r$ through chain $C_q''$~since~$w(C_q'')_1 < w(C_q)_1$, $w(C_q'')_1 = w(C_q')_1$ and $w(C_q'')_2 < w(C_q')_2$. Last, voter $t$ is assigned guru $s$ through chain $C_t''$ because~$w(C_t'')_1 < w(C_t)_1$, $w(C_t'')_1 = w(C_t')_1$~and~$w(C_t'')_2 < w(C_t')_2$.
 Therefore rule $d^{D}$ returns the preference profile over alternatives~$\{ (p, \succ^\A_r), (q, \succ^\A_r), (s, \succ^\A_s), (r, \succ^\A_r), (t, \succ^\A_s)\}$.
Using rule $d^{B}$ instead, voter $p$ is assigned guru $r$ through the chain $C_p$ due to inequalities $|C_p| < |C_p'| < |C_p''|$. Voter $q$ is assigned guru $s$ through $C_q$ due to  $|C_q| < |C_q'| < |C_q''|$ and voter $t$ is assigned guru $r$ through $C_t$ because of $|C_t| < |C_t'| < |C_t''|$. Therefore, rule $d^{B}$ returns
 the profile over alternatives~$\{(p, \succ^\A_r), (q, \succ^\A_s), (s, \succ^\A_s), (r, \succ^\A_r), (t, \succ^\A_r)\}$.
 \end{example}


\

Examples~\ref{eg:ExampleNoLoop} and~\ref{eg:ExampleLoop} show that guru participation may not hold for depth-first delegation when a cycle exists in the delegation graph. Due to this cycle, when $t$ joins the election, both  $r$ and $s$ receive new delegated voting rights, thus Observation 1 does not occur\footnote{Observation 1 states how guru participation can be satisfied when a voting rule satisfying cast participation is used: when a  voter joins the delegating electorate, if only one voter increase the number of times assigned as a guru, then this voter is weakly better off.}. We summarise the above for the set of voting rules satisfying cast participation  $\bar{F}$.

\begin{theorem}
Given a voting rule $f \in \bar{F}$,  guru participation is not guaranteed to hold when using the depth-first delegation rule $d^D$.
\label{theorem:1}
\end{theorem}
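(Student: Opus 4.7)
The plan is to establish this negative result by producing a concrete counterexample. Because the statement asserts only that guru participation is \emph{not guaranteed}, it suffices to exhibit one voting rule $f \in \bar{F}$, one preference profile, and one delegating voter whose arrival strictly hurts her designated guru under $d^{D}$. The pair of instances already analysed in Examples~\ref{eg:ExampleNoLoop} and~\ref{eg:ExampleLoop} is tailor-made for this purpose.

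First I would fix $f$ to be plurality (or any majority-based rule, all of which lie in $\bar{F}$ since they satisfy cast participation on single-winner binary decisions). Then I would read off from Table~\ref{tab:electionResults} the outcomes of $d^{D}$ in the two figures: under Figure~\ref{fig:loop}$(a)$, where $t$ abstains, ``No'' wins $3$ to $1$; under Figure~\ref{fig:loop}$(b)$, where $t$ now delegates, the result flips to ``Yes'' winning $3$ to $2$. The relevant guru--delegator pair is $(i, j) = (s, t)$, because Example~\ref{eg:ExampleLoop} shows that $d^{D}$ assigns $t$'s vote to $s$ via the chain $C_t''$. Letting $P^\VV$ be the profile in Figure~\ref{fig:loop}$(b)$, we have $P^\VV_{-t}$ equal to the profile in Figure~\ref{fig:loop}$(a)$, so guru participation would require that $s$ weakly prefer the ``Yes'' outcome to the ``No'' outcome. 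Since ``No'' $\succ^\A_s$ ``Yes'', this is a direct violation, which proves the theorem.

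The conceptual step that ties the proof to the rest of the paper is explaining \emph{why} the violation occurs, and this is where Observation~$1$ does its work. The cycle $p \to q \to t \to p$ in Figure~\ref{fig:loop}$(b)$ causes the depth-first traversal from each of $p$ and $q$ to follow the lowest-weight edge at every level, go all the way around the cycle, and terminate at $r$ rather than at $s$. Consequently, when $t$ joins the delegating electorate and is routed to $s$, the votes of $p$ and $q$ are simultaneously rerouted \emph{away} from $s$ and toward $r$. Hence two distinct casting voters, $s$ and $r$, gain new delegators, falsifying the single-guru hypothesis of Observation~$1$ and allowing the election outcome to move against $s$. The only subtle point to verify is that this failure mode genuinely requires a cycle in the delegation graph --- a feature that motivates the switch to $d^{B}$ in the remainder of the paper --- but it is not needed for the statement of the theorem itself, since the single counterexample above already suffices.
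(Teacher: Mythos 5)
Your proposal is correct and takes essentially the same route as the paper: both prove the theorem by the counterexample of Figure~\ref{fig:loop}, using Examples~\ref{eg:ExampleNoLoop} and~\ref{eg:ExampleLoop} to show that when $t$ joins the delegating electorate her guru $s$ sees the outcome flip from ``No'' to ``Yes'', violating guru participation. The only (harmless) differences are that you instantiate $f$ as plurality where the paper keeps $f$ generic in $\bar{F}$, and you add the explanatory remark about the cycle and Observation~1, which the paper places in the surrounding discussion rather than in the proof itself.
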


\begin{proof}

Consider the preference profile over alternatives and the preference profile over voters of Figure~\ref{fig:loop}$(b)$,
\begin{align*}
& P^\A = \{(r,\succ^\A_r), (s,\succ^\A_s)\} \\
& P^\VV = \{(p,\succ^\VV_p), (q,\succ^\VV_q), (t,\succ^\VV_t)\}, 
\end{align*}
where the preferences over alternatives for  $r$ and $s$ are: ``Yes'' $\succ^\A_r$ ``No", ``No"$\succ^\A_s$ ``Yes'' and the preferences over voters for $p,q,t$ are: $q \succ^\VV_p r$,~~$t \succ^\VV_q s$~and~$p \succ^\VV_t r$. We prove this theorem using Examples~\ref{eg:ExampleNoLoop} and ~\ref{eg:ExampleLoop}. 
In Example~\ref{eg:ExampleNoLoop}, where voter $t$ abstains, rule $d^D$ returns profile
\begin{align*}
d^{D}(P^\A,g(P^\VV_{-t}))= \{ (p, \succ^\A_s), (q, \succ^\A_s), (s, \succ^\A_s), (r, \succ^\A_r)\},
\end{align*}
which gives  three votes (via $s$) for alternative ``No" and one vote (via $r$) for alternative ``Yes" (see also Table \ref{tab:electionResults}). From Example~\ref{eg:ExampleLoop} where voter $t$ delegates, rule $d^D$ returns profile
\begin{align*}
& d^{D}(P^\A,g(P^\VV))= \\
& \{ (p, \succ^\A_r), (q, \succ^\A_r), (s, \succ^\A_s), (r, \succ^\A_r), (t, \succ^\A_s)\},
\end{align*}
which gives  three votes for ``Yes" and two votes for ``No".
Observe that the election result changes from ``No" to ``Yes" despite the fact that $t$ votes for ``No" through her guru $s$, i.e. $(t, \succ^\A_s) \in d^{D}(P^\A, g(P^\VV))$.  Note that due to the preference ``No"$\succ^\A_s$ ``Yes'', we get
\begin{align}\label{s}
f(d^{D}(P^\A,g(P^\VV))) \prec^{\A}_s f(d^{D}(P^\A,g(P^\VV_{-t}))),
\end{align}
where $f$ is a voting rule satisfying cast participation.
However, the preference expressed by \eqref{s} implies that guru $s$ becomes worst off after $t$ delegates to her, which violates the definition of guru participation  \eqref{guru_p}, proving this theorem.
\end{proof}

We highlight that if a delegation graph has no cycle then guru participation is guaranteed to hold for  the depth-first delegation rule, which show through Lemma \ref{lem:DFDLemma1} and Theorem~\ref{theorem:1b}.

\begin{lemma}
When using depth-first delegation rule $d^D$,  if there is no cycle in the delegation graph then Observation 1 holds.
\label{lem:DFDLemma1}
\end{lemma}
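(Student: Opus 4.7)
The plan is to characterise $d^D$ recursively on acyclic graphs and then track exactly which chains change when $j$ is added to $V^d$. First I would verify that on an acyclic graph the chain $d^D$ selects for each voter $v$ coincides with the recursive depth-first chain: if $v \in V^c$ return $v$; otherwise try $v$'s ranked preferences $\succ^\VV_{v,1}, \succ^\VV_{v,2}, \ldots$ in order and recurse into the first one whose recursion reaches a caster. The key observation is that a chain starting at a fixed voter is uniquely determined by its weight sequence (the weight at each position names the rank used, and a voter's rank function is injective), so distinct $v$-chains have distinct weight sequences. Comparing the recursive chain $C_v$ with any other valid chain $C_v'$ at the first position $y^\ast$ where their weights differ, $C_v$ must be strictly smaller there (otherwise $C_v'$ would itself witness that a smaller rank succeeds and the recursion would have preferred it), which directly verifies $(b2)$ in $C_v$'s favour. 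Hence $C_v$ is the unique $d^D$-optimal chain.

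Second, I would compare the two delegation graphs. In $g(P^\VV)$ the voter $j$ has no outgoing edges and $j \notin V^c$, so $j$ cannot appear as an internal node of any valid chain. In $g(P^\VV_{+j})$, by hypothesis $j$'s recursive chain $C_j^\ast$ ends at $i$. The set of valid chains therefore grows only by chains that use $j$ as an internal node.

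Third, I would prove the central claim: for every voter $k \neq j$, either $k$'s $d^D$-chain is identical in both graphs (yielding condition $(a)$ of Observation~1 with $l$ being $k$'s unchanged guru) or the new chain $C_k'$ passes through $j$ and ends at $i$ (yielding condition $(b)$). If $j$ is absent from $C_k'$, then $C_k'$ is also a valid before-chain; being $d^D$-optimal over the superset, it is optimal over the subset, and the uniqueness from the first step forces $C_k' = C_k$. If $j \in C_k'$, write $C_k' = P \cdot S$ with $P = \langle k, v_1, \ldots, v_{r-1}, j\rangle$ and $S = \langle j, u_1, \ldots, m'\rangle$, and form the glued chain $P \cdot C_j^\ast$. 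Acyclicity keeps this glued chain repetition-free: any coincidence between a voter of $C_j^\ast$ and some $v_s$ would produce a directed cycle $v_s \to \cdots \to j \to \cdots \to v_s$. If $S \neq C_j^\ast$, the unique $d^D$-optimality of $C_j^\ast$ for $j$ supplies a $(b2)$ witness $y$ with $w(C_j^\ast)_y < w(S)_y$ and $w(C_j^\ast)_x \leq w(S)_x$ for all $x < y$. Because $P$ contributes identical weights at positions $1,\ldots,r$ of both $P \cdot C_j^\ast$ and $C_k'$, this lifts to a $(b2)$ witness at full-chain position $r+y$ showing that $P \cdot C_j^\ast$ beats $C_k'$, contradicting the $d^D$-optimality of $C_k'$. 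Hence $S = C_j^\ast$ and $m' = i$. The remaining cases are immediate: $k = j$ gives $(b)$ by hypothesis, and $k \in V^c \setminus \{j\}$ gives $(a)$ with $l = k$ since $(k,\succ^\A_k)$ appears unchanged in both profiles.

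The technically delicate point is that $(b2)$ is not the standard lexicographic order—it uses $\leq$ rather than $=$ on the earlier positions—so one must verify both that the recursive chain is uniquely non-dominated and that a $(b2)$ witness on the $j$-suffix lifts cleanly to a $(b2)$ witness on the whole chain when the prefix is shared. Acyclicity is used twice: it makes the recursive characterisation well-founded, and it guarantees that $P \cdot C_j^\ast$ introduces no repeated voter in the crux step.
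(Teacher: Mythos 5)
Your proof is correct, and its pivotal combinatorial fact is the same one the paper relies on: a voter $k$ who delegates through $j$ can only end up at a guru other than $j$'s if some intermediate voter is shared between the prefix $\langle k,\ldots,j\rangle$ and $j$'s own chain, which forces a directed cycle. The difference is direction and rigour. The paper argues by contradiction and simply \emph{asserts} the key step (``For $d^D$, this only occurs if there exists a voter $h$ \ldots''), leaving the reader to supply why the blocked route is the only mechanism by which the gurus can diverge. You prove the contrapositive directly, and you build the machinery the paper omits: the recursive depth-first characterisation of the $d^D$-selected chain on acyclic graphs, the observation that same-root chains are determined by their weight sequences (so condition $(b2)$ collapses to a lexicographic comparison at the first differing position), and the explicit gluing argument showing $P\cdot C_j^\ast$ is repetition-free under acyclicity and $(b2)$-dominates $P\cdot S$ whenever $S\neq C_j^\ast$. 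You also handle the cases the paper passes over silently (chains avoiding $j$ being optimal over the subset, voters who previously abstained, casting voters). What your route buys is a genuinely complete proof; what it costs is length, plus a small point worth making explicit: Definition~\ref{def:DFD} only forbids domination by chains ending at a \emph{different} guru, so ``the unique $d^D$-optimal chain'' should be read as ``the unique guru, witnessed by the lexicographically least successful chain'' --- your argument only ever uses the latter, so nothing breaks.
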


\begin{proof}
Assume there exists a delegation graph with no cycles where Observation 1 does not hold. We show that the only case where Observation 1 does not hold is when a cycle exists. 

Recall that, by Observation 1,  guru participation is guaranteed to hold if whenever a voter $j$ joins the delegating electorate, there exists only one voter, say $i$, in the casting electorate who increases the number of times she becomes a guru. Consider another voter $k$ who changes her assigned guru  to a voter $l$ after $j$ joins the delegating electorate, where  $l \neq i$ and $k \neq j$. This means that, apart from $i$,  voter $l$ also  increases the times  she becomes a guru.
 Next we describe that, when $d^D$ is used, this case  can only arise through the following circumstance.
Let guru $i$ be assigned to voter $j$ through  delegation chain  $C_j = \langle j, \ldots, i \rangle$ and  guru $l$ be  assigned to voter $k$ through  delegation chain $C_k= \langle k,...,j,...,l \rangle$.  Chain $C_k$ must pass through $j$ because all chains without $j$ are available  before $j$ delegates. Note that even if both chains pass through voter $j$, they end at different gurus. For $d^D$, this only occurs if there exists a voter $h$  with   $h \neq i, j, l$, such that
 \begin{align}
 &C_j= \langle j,\ldots, h, \ldots, i \rangle \quad \text{and} \label{cycle1} \\
 &C_k=\langle k,\ldots, h, \ldots, j, \ldots, l \rangle. \label{cycle2}
\end{align}
 The reason for the above is that $k$'s delegation  goes through $h$ to reach $j$, but then the preferred delegation from $j$ passes through $h$ (see chain $C_j$). As $k$'s delegation already includes $h$ before $j$, and an intermediator  voter cannot be repeated (definition~\ref{ref:DR}), $k$ uses another route to guru $l$ (through a less preferred option of $j$). From \eqref{cycle1} and \eqref{cycle2}, observe that there  exists a cycle in the graph, i.e. the cycle $\langle h, ..., j, ..., h \rangle$, which contradicts our assumption and proves the lemma.
 \end{proof}

\begin{theorem}
Given a voting rule $f \in \bar{F}$ and a delegation graph with no cycles,  guru participation is guaranteed to hold when using the depth-first delegation rule $d^D$.
\label{theorem:1b}
\end{theorem}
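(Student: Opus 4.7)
The plan is to deduce the theorem directly by composing Lemma~\ref{lem:DFDLemma1} with Observation~1. Intuitively, Observation~1 supplies a sufficient structural condition under which a newly delegating voter can only help (and never hurt) its assigned guru, and Lemma~\ref{lem:DFDLemma1} guarantees that this sufficient condition always holds for $d^D$ whenever the delegation graph is acyclic. Since Theorem~\ref{theorem:1b} hypothesises acyclicity, the two results snap together.

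First I would fix an arbitrary $f \in \bar{F}$, arbitrary disjoint electorates $V^c, V^d \in \mathcal{E}(\VV)$, arbitrary profiles $P^\A \in \mathcal{P}^{\A,V^c}$ and $P^\VV \in \mathcal{P}^{\VV,V^d}$ with $g(P^\VV)$ acyclic, and arbitrary $i \in V^c$, $j \in V^d$ with $(j,\succ^\A_i) \in d^D(P^\A, g(P^\VV))$. The aim is to show that $f(d^D(P^\A, g(P^\VV))) \succeq^\A_i f(d^D(P^\A, g(P^\VV_{-j})))$, which is the definition of guru participation specialised to $d^D$.

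Next I would invoke Lemma~\ref{lem:DFDLemma1}. Since $g(P^\VV)$ is acyclic and $g(P^\VV_{-j})$ is obtained by removing $j$'s outgoing edges (a subgraph of an acyclic graph is acyclic), the lemma applies to the pair of instances obtained before and after $j$ joins the delegating electorate. It yields that for every $k \in \VV$, either $k$'s assigned guru is identical in $d^D(P^\A, g(P^\VV_{-j}))$ and $d^D(P^\A, g(P^\VV))$, or $k$'s assigned guru in the latter profile is $i$. This is precisely the premise of Observation~1, whose conclusion is the desired preference inequality. Putting the two steps together proves the theorem.

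The main obstacle I anticipate is the careful bookkeeping needed to show that Lemma~\ref{lem:DFDLemma1} really delivers Observation~1's premise for every universally quantified configuration appearing in the definition of guru participation, rather than just for a single pair of instances. In particular I would need to justify that the acyclicity hypothesis on $g(P^\VV)$ propagates to $g(P^\VV_{-j})$, and to rule out the edge case in which $j$'s addition itself closes a cycle, which is excluded precisely by the no-cycle assumption on $g(P^\VV)$. Once this is settled the theorem follows immediately, with no further computation required.
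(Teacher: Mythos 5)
Your proposal is correct and follows exactly the paper's own (one-line) argument: compose Lemma~\ref{lem:DFDLemma1} with Observation~1 under the acyclicity hypothesis. In fact you supply more of the quantifier bookkeeping (fixing the arbitrary electorates and profiles, and noting that acyclicity of $g(P^\VV)$ carries over to $g(P^\VV_{-j})$) than the paper itself does.
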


\begin{proof}
We prove this using Lemma \ref{lem:DFDLemma1} and Observation 1.
\end{proof}

We have  previously shown that depth-first delegation does not guarantee guru participation when the delegation graph contains cycles.  The next theorem states that 
breadth-first delegation always  guarantees guru participation. To show this, we  first introduce the following observation and lemma.

\begin{observation} \label{ob:BFD2}
Consider two voters $j$ and $k$ in a delegating electorate. Using the breadth-first delegation rule $d^B$, if $k$ is assigned guru $l$ through a delegation chain $C_k$ with $j \notin C_k$, then $k$ is assigned guru $l$ even when $j$ abstains.  This is because rule $d^B$ has used  $C_k$ ahead of any possible  delegation chain that includes $j$.
Therefore chain $C_k$ will still be used by $d^B$ when $j$ is in the abstaining electorate and no possible  delegation chain that includes $j$ can be formed.

\end{observation}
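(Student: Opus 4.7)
The plan is to argue that moving $j$ from the delegating electorate to the abstaining electorate only shrinks the collection of valid delegation chains available to $k$, and that the chain $C_k$ (which does not involve $j$) survives this restriction as the breadth-first minimum. Concretely, I would denote by $\mathcal{C}_k$ the set of all valid delegation chains for $k$ when $j$ delegates, and by $\mathcal{C}_k'$ the corresponding set when $j$ abstains, and then show both $\mathcal{C}_k' \subseteq \mathcal{C}_k$ and $C_k \in \mathcal{C}_k'$.

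For the inclusion, I would invoke Definition~\ref{deft:dc}: every chain requires each non-terminal voter in the chain to have an outgoing edge in the delegation graph, which (via the function $g$) exists iff that voter has a preference relation in $P^\VV$. When $j$ abstains, $(j, \succ^\VV_j)$ is removed from $P^\VV$, so $j$ has no outgoing edges in $g(P^\VV_{-j})$, and therefore any chain that contains $j$ as a non-terminal voter is no longer valid; $j$ cannot appear as a terminal voter of a chain for $k$ either, since terminal voters must lie in $V^c$. All other chains remain valid because the edges incident to voters different from $j$ are unchanged. This establishes the equality $\mathcal{C}_k' = \{C \in \mathcal{C}_k : j \notin C\}$, and since $j \notin C_k$ we obtain $C_k \in \mathcal{C}_k'$.

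The final step is to invoke the hypothesis that $d^B$ chose $C_k$ when $j$ was delegating. By Definition~\ref{deft:BFD}, this means no chain in $\mathcal{C}_k$ ending at a guru different from $l$ beats $C_k$ in the breadth-first order (shorter length, or equal length with lexicographically smaller weights). Since $\mathcal{C}_k' \subseteq \mathcal{C}_k$, no chain in the smaller set $\mathcal{C}_k'$ can beat $C_k$ either, and $C_k$ itself still witnesses clause $(a)$ of Definition~\ref{deft:BFD}. Therefore $d^B$ again assigns guru $l$ to $k$ when $j$ abstains.

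I expect the only genuinely delicate point to be the characterisation $\mathcal{C}_k' = \{C \in \mathcal{C}_k : j \notin C\}$, because one must be careful to rule out the possibility that new chains appear when $j$ is removed. This is handled by noting that abstention only deletes edges from the delegation graph (those outgoing from $j$) and never adds any, so no chain that was previously invalid can become valid; once this is spelled out, the rest of the argument is a direct application of the minimality clause in the definition of $d^B$.
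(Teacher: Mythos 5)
Your argument is correct and follows essentially the same route as the paper, which justifies the observation inline by noting that $d^B$ already preferred $C_k$ over every chain containing $j$, and that $j$'s abstention only removes chains (namely those through $j$) without creating new ones. Your version merely formalises this monotonicity argument via the sets $\mathcal{C}_k$ and $\mathcal{C}_k'$, which is a more careful rendering of the same idea.
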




\begin{lemma}
Consider two voters $j$ and $k$ in a delegating electorate. Using the breadth-first delegation rule $d^B$, if voter $k$ is assigned her guru through a delegation chain $C_k$ with $j \in C_k$, then $k$ is assigned the same guru as $j$.  
\label{lem:BFD1}
\end{lemma}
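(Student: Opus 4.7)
The plan is to prove the lemma by contradiction. Let $C_k = \langle k = v_0, v_1, \ldots, v_m = j, v_{m+1}, \ldots, v_n = l \rangle$ be the chain via which $d^B$ assigns guru $l$ to $k$, and suppose that $j$ is instead assigned a different guru $l^* \neq l$ via some chain $C_j^* = \langle j, d_1, \ldots, d_q = l^* \rangle$. The suffix $S = \langle j, v_{m+1}, \ldots, v_n \rangle$ of $C_k$ starting at $j$ inherits the no-repetition property and edge-validity from $C_k$, so $S$ is itself a valid delegation chain for $j$ ending at $l$. Because $j$'s guru under $d^B$ is $l^* \neq l$, Definition~\ref{deft:BFD} forces $C_j^*$ to be strictly better than $S$ in the breadth-first ordering: either $|C_j^*| < |S|$, or $|C_j^*| = |S|$ and $w(C_j^*)$ is lex-strictly smaller than $w(S)$ in the sense of condition $b2$.

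The second step is to splice $C_j^*$ into $C_k$ at $j$, producing the walk $W = \langle v_0, \ldots, v_{m-1}, j, d_1, \ldots, d_q \rangle$ of length $m + 1 + q$. In general $W$ may fail to be a delegation chain, since some $d_i$ could coincide with a $v_r$ for $r < m$; in that event I apply a standard loop-shortcutting procedure (delete the segment strictly between the two occurrences of any repeated vertex, iterating until no repeats remain) to obtain a valid chain $W'$ for $k$. Since $d_q = l^*$ is a casting voter whereas every $v_r$ with $r \leq m$ is a delegating voter, $l^*$ occurs exactly once in $W$, so $W'$ still ends at $l^*$.

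Finally I compare $W'$ with $C_k$. If $|C_j^*| < |S|$, then $|W'| \leq |W| = m + 1 + q < n + 1 = |C_k|$, so $W'$ is strictly shorter than $C_k$ while ending at $l^* \neq l$, contradicting Definition~\ref{deft:BFD}$(b1)$ applied to $k$. If $|C_j^*| = |S|$ and the splicing introduces at least one repeat, the shortcutting strictly shortens the walk, so $|W'| < |C_k|$ and the same contradiction applies. If $|C_j^*| = |S|$ and $W$ is already simple, then $|W'| = |W| = |C_k|$, the first $m$ entries of $w(W)$ and $w(C_k)$ coincide (they are the weights along the shared $k$-to-$j$ prefix of $C_k$), and the remaining $q = n - m$ entries of $w(W)$ form $w(C_j^*)$, which is lex-strictly smaller than the corresponding $w(S)$-segment of $w(C_k)$. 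This contradicts Definition~\ref{deft:BFD}$(b2)$ applied to $k$. The main technical obstacle is precisely the loop-shortcutting step: when splicing introduces repetitions, the extracted $W'$ loses edges relative to $W$ so its weight list is no longer directly comparable to $w(C_k)$, but the resulting strict length reduction is exactly what closes the argument via the length criterion alone in that subcase.
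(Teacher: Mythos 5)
Your proof is correct, and while it shares the paper's overall strategy (contradiction by exhibiting a chain for $k$ that beats $C_k$ under Definition~\ref{deft:BFD}), the execution is genuinely different and, in fact, more complete. The paper splits into two cases: either the suffix of $C_k$ from $j$ is the chain $d^B$ ``should use'' for $j$ (contradiction), or $j$'s actual chain $C_j$ shares an intermediate voter $e$ with the prefix of $C_k$, in which case the paper builds the shortcut $\langle k,\dots,e,g,\dots,i\rangle$ and wins on length alone. You instead splice $j$'s optimal chain into $C_k$ at $j$ itself and then remove any loops created by the splice; this single construction subsumes both of the paper's cases, explicitly verifies that the result is a valid chain still ending at $j$'s guru (using the disjointness of the casting and delegating electorates), and---unlike the paper---handles the equal-length subcase by carrying out the lexicographic weight comparison of condition $b2$. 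The one point you state without argument is that $C_j^*$ is \emph{strictly} better than the suffix $S$ in the breadth-first ordering: Definition~\ref{deft:BFD}$(b)$ only gives directly that $S$ does not beat $C_j^*$, and you need antisymmetry of $\succ^{\VV}$ (distinct successors of a voter get distinct ranks) to conclude that two distinct same-length chains from $j$ are always lex-comparable, so that ``$S$ does not beat $C_j^*$'' upgrades to ``$C_j^*$ beats $S$.'' This is a minor omission (the paper's own proof is far terser on such points), and the rest of your argument---in particular the observation that loop-shortcutting strictly reduces length and therefore lets you fall back on criterion $b1$ whenever the weight lists become incomparable---is sound.
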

\begin{proof}

Assume that, using $d^B$, voter $j$ is assigned guru $i$ through delegation chain $C_j = \langle j \dots,i \rangle$ and $k$ is assigned a different guru $l$ through a delegation chain that includes $j$, i.e. $C_k = \langle k, \dots, j, \dots, l \rangle$. Then either $(a)$ or $(b)$ occurs:

\begin{enumerate}[$(a)$]

\item rule $d^B$ should use chain $C'_j = \langle j, \dots,l \rangle$, which contradicts the assumption that $C_j$ is used,
\item there exists a shared intermediate voter $e$ such that 
\begin{align*}
&C_j = \langle j, \dots,e,g, \dots,i \rangle \quad \text{and} \\
&C_k = \langle k, \dots, f, j, \dots, l \rangle,
\end{align*}
 where $e \in \langle k, \dots,f\rangle$. Recall that $d^B$ prioritises shorter length delegation chains (see definition~\ref{def:BFD}). We show that voter $k$ has a  shorter delegation chain available that does not include $j$, i.e. there exists a $C_k'$ such that
 $|C_k'| < |C_k|$ and $j \notin C'_k$. Let  $C_k' =\langle k, \dots, e, g, \dots,i \rangle$.
According to $d^B$, the delegation chain used to assign $j$'s guru, $\langle j \dots, e, g, \dots,i\rangle$, is shorter or equal in length to any other alternative, thus $|\langle j, \dots, e, g, \dots, i\rangle| \leq |\langle j, \dots, l\rangle|$. Observe that  
\begin{align*}
|\langle g, \dots, i\rangle| < |\langle j, \ldots, e, g, \dots, i\rangle| \leq |\langle j, \dots, l\rangle \Rightarrow \\
|\langle k, \dots, e\rangle| + |\langle g, \dots, i\rangle| < |\langle k, \dots, e\rangle| + |\langle j, \dots, l\rangle|. 
\end{align*}
Since $e \in \langle k, \ldots, f \rangle$, we can rewrite the previous as 
\begin{align*}
|\langle k,...,e\rangle| + |\langle g,...,i\rangle| < |\langle k,...,f\rangle| + |\langle j,...,l\rangle|.
\end{align*}
Therefore, rule $d^B$ should use $C'_k$ to assign $k$'s guru. However, since $j \notin C'_k$, the assumption is contradicted.
\end{enumerate} 
The contradictions of both $(a)$ and $(b)$ prove this lemma.
\end{proof}


\begin{theorem}
Given a voting rule $f \in \bar{F}$,  guru participation is guaranteed to hold when using the breadth-first delegation rule $d^B$.
\end{theorem}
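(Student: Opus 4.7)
My plan is to apply Observation~1 directly, using the two structural results about $d^B$ that have just been established: Observation~\ref{ob:BFD2} (chains that do not pass through $j$ survive $j$'s abstention) and Lemma~\ref{lem:BFD1} (chains that do pass through $j$ inherit $j$'s guru). Since $f \in \bar{F}$ satisfies cast participation, Observation~1 reduces guru participation for $i$ to a purely combinatorial condition on the before/after delegation assignments, and these two $d^B$-specific results are precisely what verifies that condition.

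Concretely, I would fix an arbitrary profile pair $(P^\A, P^\VV)$ with $j \in V^d$ and $d^B$ assigning $j$ to guru $i$, and compare the two returned profiles $\hat{P}^\A = d^B(P^\A, g(P^\VV))$ and $\hat{P}^{\prime\A} = d^B(P^\A, g(P^\VV_{-j}))$. For each $k \in \VV$, I would verify that either case $(a)$ or case $(b)$ of Observation~1 holds. Casting and abstaining voters whose status is unchanged are immediate. For $k \in V^d$ with $k \neq j$, let $C_k$ be the delegation chain that $d^B$ uses for $k$ in $\hat{P}^\A$, and case-split on whether $j \in C_k$: if $j \notin C_k$, then Observation~\ref{ob:BFD2} applies directly and $k$ is assigned the same guru in $\hat{P}^{\prime\A}$, giving case $(a)$; if $j \in C_k$, then Lemma~\ref{lem:BFD1} forces $k$'s guru in $\hat{P}^\A$ to equal $j$'s guru, which is $i$, giving case $(b)$. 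The remaining case $k=j$ is also immediate: $j$ abstains in $\hat{P}^{\prime\A}$ and is assigned to $i$ in $\hat{P}^\A$, which is case $(b)$.

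Once the hypothesis of Observation~1 has been verified for every $k \in \VV$, the observation yields $f(\hat{P}^\A) \succeq^\A_i f(\hat{P}^{\prime\A})$, exactly the inequality required by the definition of guru participation for $i$. I expect essentially no resistance at the level of the theorem itself: the main conceptual obstacle was Lemma~\ref{lem:BFD1}, whose length-based chain-exchange argument does the heavy lifting of ruling out the ``cycle-induced split'' that killed $d^D$ in Theorem~\ref{theorem:1}. What remains for this final theorem is only the clean orchestration of the case split on $j \in C_k$ and a careful treatment of the $k=j$ boundary case.
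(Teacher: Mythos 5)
Your proposal is correct and follows essentially the same route as the paper: both reduce the theorem to Observation~1 via the case split on whether $j$ lies in $k$'s delegation chain, invoking Observation~\ref{ob:BFD2} for the $j \notin C_k$ case and Lemma~\ref{lem:BFD1} for the $j \in C_k$ case. Your version is slightly more explicit about the $k=j$ boundary case and the unaffected casting/abstaining voters, but the decomposition and the key ingredients are identical to the paper's.
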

\begin{proof}
By Observation~\ref{ob:BFD2},  given voters $j$ and $k$ in the delegating electorate, if  a voter $k$ does not delegate through $j$, then $k$'s  assigned guru (if any) is the same as if $j$ abstained. By Lemma~\ref{lem:BFD1}, if a voter $k$ delegates through $j$, then the guru of $k$ is the same as the guru of $j$. Combining the above cases, we show that (regardless of $k$ delegating through $j$ or not),  whenever a voter $j$ joins the delegating electorate and is assigned to a guru $i$, then $i$ is the only casting voter who increases the number of times she becomes a guru. Since also $f \in\bar{F}$,  then Observation 1 holds, meaning that the breadth-first delegation rule $d^B$ is guaranteed to satisfy guru participation.
\end{proof}

\section{Conclusion and future work} \label{sec:con}


In this paper, we discuss the depth-first and the breadth-first delegation rule  proving that only the latter  has the desirable property that every guru weakly prefers receiving delegating voting rights. However, there could be other delegation rules satisfying the same or other interesting properties that improve the concept of liquid democracy.
Towards this path, we note that the issue of current liquid democracy implementations  suffering from a small subset of gurus representing a large part of the electorate (\citeauthor{KlingKHSS15} \citeyear{KlingKHSS15}) could be counteracted by the breadth-first delegation rule, as this rule favours keeping delegated voting rights close to their origin. We strongly believe that this hypothesis should be investigated. Other interesting future work include investigating guru participation with voting rules that do not satisfy cast participation,  relaxing the assumption of strict personal rankings over voters, and analysing other types of participation.

\bibliographystyle{ijcai19}
\bibliography{FinalBib}

\end{document}